\newtheorem{theorem}{Theorem}
\title{A Note on Minimax Testing and Confidence Intervals in Moment Inequality Models}
\author{Timothy B. Armstrong\thanks{email: timothy.armstrong@yale.edu}
\\
Yale University}
\begin{document}

\maketitle

\begin{abstract}
This note uses a simple example to show how moment inequality models used in the empirical economics literature lead to general minimax relative efficiency comparisons.  The main point is that such models involve inference on a low dimensional parameter, which leads naturally to a definition of ``distance'' that, in full generality, would be arbitrary in minimax testing problems.  This definition of distance is justified by the fact that it leads to a duality between minimaxity of confidence intervals and tests, which does not hold for other definitions of distance.  Thus, the use of moment inequalities for inference in a low dimensional parametric model places additional structure on the testing problem, which leads to stronger conclusions regarding minimax relative efficiency than would otherwise be possible.
\end{abstract}

\section{Introduction}\label{introduction_sec}

Recent papers have formulated relative efficiency in moment inequality models in terms of minimax testing and confidence intervals.  Using these definitions of relative efficiency, fairly general statements can be made about how test statistics should be chosen, and about choices of tuning parameters such as weighting functions in the definitions of these test statistics \citep{armstrong_weighted_2014,armstrong_choice_2014}.
This may be surprising to someone familiar with the literature on nonparametric testing
since, in similar nonparametric testing problems, one typically cannot make such general statements about relative efficiency \citep[see, e.g., Chapter 14 of][]{lehmann_testing_2005}.  This is the case even if one restricts attention to using minimaxity as a criterion.  Indeed, the monograph by \citet{ingster_nonparametric_2003} summarizes a large literature that considers different ways of setting up the problem (in terms of the norm and smoothness class) and reaches different conclusions depending on how this is done.
The purpose of this note is to show, in the context of a simple example, how the additional structure of moment inequality models considered in the econometrics literature leads to the relative efficiency results described above.

With this purpose in mind, I consider
notions of minimax testing in moment inequality models that differ in their definition of ``distance from the null.''  In the context of a simple example, I make several points.  First, I argue that a definition of minimax testing related to the excess length of confidence intervals constructed from a test is the most empirically relevant.  I discuss how this definition has a simple, intuitive interpretation.  Second, I point out that this notion of minimaxity corresponds, in this example, to a specific definition used in the nonparametric testing literature, and that using other definitions of minimaxity from that literature would lead to different tests.  Furthermore, I show that confidence intervals constructed from these tests have poor minimax properties.

Thus, for moment inequality models, one can make conclusions about which test should be used that would not be possible without the structure of the problem (in particular, the fact that a test is being inverted to obtain a confidence region for a low dimensional parameter).  To further illustrate this point, I consider a different testing problem involving optimal treatment assignment.  A plausible formulation of the decision problem in the latter application leads to a different definition of minimaxity and a different test than the optimal test for the moment inequality problem, even though the null hypothesis takes the same form.

\subsection{Related Literature}

The main point of this note is essentially an application of the argument
that, in considering estimation or inference for a parameter (defined as a function of a probability distribution that may be defined in a higher dimensional space), it is typically sensible to use the parameter itself in defining a decision theoretic objective function.
The original contribution here is to work out some of the implications of this for moment inequality models when the decision theoretic criterion is minimax; this broader point is not new.
Indeed, it has been an important ingredient in the literature on nonparametric function estimation, and on semiparametric plug-in estimators \citep[for an introduction to these problems, see, e.g.,][]{chen_chapter_2007,ichimura_chapter_2007,tsybakov_introduction_2010}.
It is behind the well known fact that the ``optimal'' bandwidth or number of series terms, etc., differs depending on whether one is interested in a regression discontinuity parameter, weighted average derivative, global estimation of a conditional mean, etc. \citep[see, e.g.,][]{imbens_optimal_2012,powell_optimal_1996,sun_adaptive_2005}.
This point is also behind the work of \citet{dumbgen_optimal_2003}, who uses minimax adaptive tests for global inference on a conditional mean (the fact that a certain notion of minimax nonparametric testing relates to minimax confidence intervals in that context is closely related to the points made here).

This note relates to the recent econometrics literature on moment inequalities and, in particular, papers by
\citet{armstrong_weighted_2014},
\citet{armstrong_choice_2014},
\citet{chernozhukov_testing_2014}
and
\citet{chetverikov_adaptive_2012},
which contain results on minimax testing and confidence intervals.
See also \citet{fang_optimal_2014}, \citet{menzel_consistent_2010} and \citet{song_local_2014} and references therein for results on minimax estimation in related problems.
The broader literature on moment inequalities is too large for a complete review in this section, and I refer the reader to the papers above for references to this literature.
The monograph by \citet{ingster_nonparametric_2003} provides a summary of the literature on minimax testing.
The equivalence results regarding minimax comparisons of tests and confidence intervals in Section \ref{minimax_ci_sec} have, to my knowledge, not been written down explicitly, although they are related to results by \citet{pratt_length_1961} regarding other notions of optimality.

\section{Setup}\label{setup_sec}

To keep things as simple and concrete as possible, I consider a specific model that leads directly to a finite sample normal testing problem with known variances.
The setup below can be considered a simplified version of models considered by \citet{heckman_varieties_1990}, \citet{manski_nonparametric_1990} and
\citet{manski_monotone_2000}.
A researcher is interested in the marginal distribution of wage offers, but only observes wages for people who work, along with an ``instrument'' $X$, that shifts labor force participation but does not affect the distribution of wage offers.  The researcher does not observe the wages of individuals outside of the workforce or even the proportion of such individuals in the population, but makes an assumption of positive selection into the workforce.  Along with the exogeneity assumption for $X$, this can be written as, letting $\theta$ be the marginal expectation of the distribution of wage offers,
\begin{align*}
\theta=E(W^*)=E(W^*|X)\le E(W^*|X, W^* \text{ observed}).
\end{align*}
We observe $X_i$ along with wages $W_i=W_i^*$ for a sample of individuals $i=1,\ldots,n$ in the workforce.  Suppose that $X_i$ takes values in a finite set, which is normalized to $\{1,\ldots,k\}$, and that $\{W_i\}_{i=1}^n$ are independent conditional on $\{X_i\}_{i=1}^n$ (and on being observed) with $W_i|\{X_i\}_{i=1}^n,W_i^*\text{ observed}\sim N(\mu(X_i),\sigma^2)$ for some unknown function $\mu(\cdot)$.

From now on, let us condition on the $X_i$'s and the event that $W_1,\ldots,W_n$ are observed, and use expectation $E(\cdot)$ to denote expectation with respect to the distribution of $W_1,\ldots,W_n$ conditional on $X_1,\ldots,X_n$ and conditional on being observed.  I also condition on $X_1,\ldots,X_n$ in probability and distributional statements from now on, so that, e.g., $W_i\sim N(\mu(X_i),\sigma^2)$ denotes that $W_i$ is $N(\mu(X_i),\sigma^2)$ conditional on $X_i$.  Suppose that $n/k$ is an integer, and that exactly $n/k$ values of $X_i$ take each value $j\in\{1,\ldots,k\}$.  Let $Z_j=\frac{1}{n/k}\sum_{i: X_i=j} W_i$.  To further simplify the problem, assume that $\sigma^2=n/k$, so that $Z_j\sim N(\mu(j),1)$.  This leads to the finite sample moment inequality model
\begin{align}\label{mi_model_eq}
\mu(j)-\theta\ge 0, \, j=1,\ldots,k
\text{ where }
Z\sim N(\mu, I_k), \,
\mu=(\mu(1),\ldots,\mu(k))'.
\end{align}
From now on, I treat $Z$, rather than the $W_i$'s, as the observed data.

The model in (\ref{mi_model_eq}) gives a family of distributions for $Z$ that depends on the unknown parameter $\mu\in\mathbb{R}^k$.  To make this explicit, I index probability statements and expectations with $\mu$, and I use the notation $S=S(Z)\stackrel{\mu}{\sim} \mathcal{L}$ to denote the statement that the statistic $S(Z)$ is distributed with law $\mathcal{L}$ under $\mu$.
The parameter $\mu$ can be thought of as a nuisance parameter, and we are interested in inference on $\theta$.  The identified set for $\theta$ is given by
\begin{align*}
\Theta_0=\Theta_0(\mu)=(-\infty,\min_{1\le j\le k} \mu(j)].
\end{align*}

\section{Confidence Intervals and Minimax Testing}\label{minimax_ci_sec}

Consider the problem of constructing a confidence interval $\mathcal{C}=\mathcal{C}(Z)$ that satisfies the coverage criterion proposed by \citet{imbens_confidence_2004}:
\begin{align}\label{im_cov_eq}
P_\mu(\theta \in \mathcal{C})\ge 1-\alpha
\text{ all }\theta\in \Theta_0(\mu).
\end{align}
Note that, in this setup, if $\mathcal{C}=(-\infty,\hat c]$ for some $\hat c=\hat c(Z)$, which will be the case for the CIs considered in this note, this will be equivalent to the (generally stronger) notion of coverage considered by \citet{chernozhukov_estimation_2007}: $P_\mu(\Theta_0(\mu) \subseteq \mathcal{C})\ge 1-\alpha$.

A confidence region satisfying (\ref{im_cov_eq}) can be obtained by inverting a family of level $\alpha$ tests of
\begin{align*}
H_{0,\theta_0}: %
\theta_0\in\Theta_0(\mu),
\end{align*}
which is equivalent to
\begin{align}\label{ineq_test_eq}
H_{0,\theta_0}: \min_{1\le j\le k} \mu(j)\ge \theta_0.
\end{align}

Consider a family of nonrandomized tests $\phi_{\theta_0}=\phi_{\theta_0}(Z)$ taking the data to a zero-one accept/reject decision for each $\theta_0\in\mathbb{R}$, and a confidence region $\mathcal{C}=\{\theta|\phi_\theta(Z)=0\}$ obtained from inverting these tests.  An obvious question is how to choose between different tests and the associated confidence regions.  Given that considerations such as uniform power comparisons or restrictions to similar-on-the-boundary or unbiased tests are of little use here
(see \citealt{lehmann_testing_1952}, \citealt{hirano_impossibility_2012}, \citealt{andrews_similar---boundary_2012}),
it is appealing to consider minimax comparisons of tests and confidence regions.

One could define both the loss function and the object of interest in several ways.  The object of interest could be the identified set $\Theta_0$, or a particular point $\theta\in\Theta_0$.  The loss function can be defined in terms of Hausdorff distance, the Lebesgue measure of the portion of $\mathcal{C}$ outside of $\Theta_0$ or above a particular point in $\Theta_0$.  While these choices are interesting in general, the simple one-sided nature of this example makes many of them equivalent, which serves the purpose of illustrating ideas in a simple context.  Since the CIs considered in this note will take the form $(-\infty,\hat c(Z)]$, it will be easiest to define the loss function for the confidence region in terms of $\hat c$.
Let $\overline \theta(\mu)=\min_{1\le j\le k}\mu(j)$ so that $\Theta_0(\mu)=(-\infty,\overline\theta(\mu)]$.  Then, the loss function for $\mathcal{C}$ can be defined in terms of $\hat c$ and $\overline \theta$.  Let
\begin{align}\label{loss_func_eq}
\ell(\hat c, \overline \theta)=\tilde \ell((\hat c-\overline \theta)_+)
\end{align}
for a nondecreasing function $\tilde \ell:\mathbb{R}^+\to\mathbb{R}^+$, where $(t)_+=\max\{t,0\}$.  A loss function of the form given above is likely to be a reasonable formulation of the preferences of many researchers and policy makers.  It simply states that smaller values of the upper endpoint, $\hat c$, are preferred so long as coverage is maintained, and treats undercoverage in a neutral manner, since type I error has already been incorporated into the coverage constraint.
The minimax risk of the confidence region $(\infty,\hat c]$ is then
\begin{align}\label{mm_ci_eq}
R(\hat c;\tilde \ell)=\sup_{\mu\in\mathbb{R}^k} E_\mu\ell(\hat c(Z),\overline \theta(\mu))
=\sup_{\mu\in\mathbb{R}^k} E_\mu\tilde \ell((\hat c(Z)-\overline \theta(\mu))_+).
\end{align}
The case of the the zero-one loss function, $\tilde \ell_b(t)=I(t\ge b)$ for some $b>0$, will be of particular interest for its simplicity and its direct relation to minimax hypothesis testing, as will be discussed below.

Let us now consider the formulation of minimaxity for the hypothesis testing problem (\ref{ineq_test_eq}).
In applied work, the goal of performing a test of (\ref{ineq_test_eq}) is often to obtain a confidence region satisfying (\ref{im_cov_eq}).  Thus, to the extent that a loss function of the form (\ref{loss_func_eq}) is reasonable in evaluating these CIs, it is a desirable property for a definition of minimax testing to lead to the same relative efficiency rankings for families of tests that would be obtained by comparing the minimax risk of the associated CIs (\ref{mm_ci_eq}) for some loss function $\tilde \ell$.

Let us consider possible formulations of minimax testing, following Chapter 8 of \citet{lehmann_testing_2005}.  For a given $\theta_0\in\mathbb{R}$, the null region of $H_{0,\theta_0}$ is the set $M_0=M_0(\theta_0)=\{\mu|\mu(j)\ge \theta_0, j=1,\ldots,k\}$.  Size control requires that
\begin{align*}
\sup_{\mu\in M_0} E_\mu\phi_{\theta_0}(Z)\le \alpha.
\end{align*}
Minimax power involves a choice of an alternative set $M_1=M_1(\theta_0)$.  Given this set, the test $\phi_{\theta_0}$ is said to have minimax power at least $\beta$ if
\begin{align}\label{mm_power_eq}
\inf_{\mu\in M_1} E_\mu\phi_{\theta_0}(Z)\ge \beta.
\end{align}
The test is said to have minimax power $\beta$ if the above display holds with equality.
For minimaxity to be interesting, $M_1$ cannot be taken to be the entire alternative set $\mathbb{R}^k\backslash M_0$, since this would lead to trivial minimax power ($\beta=\alpha$).  Thus, in full generality, minimax testing involves a degree of arbitrariness in specifying $M_1$, which has been a criticism against its use.

One of the main points of this note is to argue that, for the problem considered here, this decision is not arbitrary, and a particular class of alternatives $M_1$ should be used.  This is because of its relation with the minimax risk (\ref{mm_ci_eq}) for the associated CI, which, as argued above, is a desirable property.  Given $\theta_0$ and a positive scalar $b$, define the alternative
\begin{align}\label{mm_alt_eq}
M_1^*(\theta_0,b)=\{\mu|\theta\le \theta_0-b \text{ all } \theta\in\Theta_0(\mu)\}
=\{\mu|\overline \theta(\mu)\le \theta_0-b\}.
\end{align}
Here, $b$ is a constant that defines distance to the null, which can be calibrated so that the minimax power $\beta$ of the test is above a certain level.

It will be shown below that relative efficiency comparisons for minimax power based on (\ref{mm_alt_eq}) have a duality with relative efficiency comparisons for the corresponding CIs based on (\ref{mm_ci_eq}).  Before doing so, I make two additional points supporting the usefulness of minimaxity with the alternative $M_1^*(\theta_0,b)$.  First, note that a test $\phi_{\theta_0}$ with level $\alpha$ for $H_{0,\theta_0}$ and minimax power $\beta$ for $M_1^*(\theta_0,b)$ controls both type I error under the null $\theta_0\in\Theta_0(\mu)$ and type II error uniformly over the set of data generating processes (dgps), indexed by $\mu$, such that $\theta_0$ exceeds any possible value of $\theta$ consistent with the data generating process by at least $b$.  This has a simple interpretation that can be explained to an applied researcher
(e.g. ``you wrote down a model that would give some upper bound, $\overline \theta$, for the mean offer wage if we had the entire population; given your sample size and the test that you are using, you will be able to determine that $\theta_0$ is greater than this upper bound at least $75\%$ if the time, so long as $\theta_0$ is greater than the upper bound by at least $\$10,000$ per year'').

Second, the definitions of null and alternative can be reversed, yielding a one-sided test for $\overline \theta(\mu)$ in the other direction, and a confidence region giving a lower bound for $\overline \theta(\mu)$.  This can be used to quantify how much of the length of the one-sided interval $(-\infty,\hat c]$ is due to statistical uncertainty, and how much is due to the population upper bound $\overline \theta(\mu)$ being large.  It provides an answer to questions such as: ``should I get a larger sample size, or should I search for a different empirical strategy, perhaps with stronger assumptions?''

\subsection{Duality Between Minimaxity for CIs and Tests}

I now state two theorems giving a duality between the definitions of minimax testing and confidence intervals defined above.  I begin with a result for zero-one loss functions.

\begin{theorem}\label{duality_thm}
Let $\phi_{\theta_0}$ be a class of nonrandomized level $\alpha$ tests for the family (\ref{ineq_test_eq}), with associated confidence region $(-\infty,\hat c]$.  Let $\beta_{\theta_0}(b)$ be the minimax power of the test of $H_{0,\theta_0}$ for the alternative $M_1^*(\theta_0,b)$.
Then, for the zero-one loss function $\tilde \ell_b(t)=I(t\ge b)$,
\begin{align*}
\inf_{\theta_0\in\mathbb{R}} \beta_{\theta_0}(b)
=1-R(\hat c,\tilde \ell_b).
\end{align*}

\end{theorem}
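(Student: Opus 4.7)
The plan is to rewrite both sides of the claimed identity in terms of the same quantity, namely $\inf_\mu P_\mu(\hat c(Z) < \overline\theta(\mu)+b)$.

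First I would translate the CI loss. Since $b>0$, the indicator $\tilde\ell_b((t)_+) = I((t)_+\ge b)$ coincides with $I(t\ge b)$, so
\begin{align*}
R(\hat c,\tilde\ell_b)=\sup_{\mu\in\mathbb{R}^k}P_\mu\bigl(\hat c(Z)\ge \overline\theta(\mu)+b\bigr),
\end{align*}
and hence $1-R(\hat c,\tilde\ell_b)=\inf_\mu P_\mu(\hat c(Z)<\overline\theta(\mu)+b)$.

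Next I would translate the minimax power. Because $(-\infty,\hat c(Z)]=\{\theta:\phi_\theta(Z)=0\}$ by hypothesis, nonrandomization gives $\phi_{\theta_0}(Z)=1$ iff $\theta_0>\hat c(Z)$, so $E_\mu\phi_{\theta_0}(Z)=P_\mu(\hat c(Z)<\theta_0)$. Substituting this into the definition of $\beta_{\theta_0}(b)$ and swapping the order of the two infima,
\begin{align*}
\inf_{\theta_0\in\mathbb{R}}\beta_{\theta_0}(b)
=\inf_{\theta_0\in\mathbb{R}}\inf_{\mu:\overline\theta(\mu)\le\theta_0-b}P_\mu(\hat c(Z)<\theta_0)
=\inf_\mu \inf_{\theta_0\ge\overline\theta(\mu)+b}P_\mu(\hat c(Z)<\theta_0).
\end{align*}
For each fixed $\mu$ the map $\theta_0\mapsto P_\mu(\hat c(Z)<\theta_0)$ is nondecreasing, so the inner infimum is attained at the left endpoint $\theta_0=\overline\theta(\mu)+b$, which lies in the (closed) constraint set. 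Thus the left-hand side equals $\inf_\mu P_\mu(\hat c(Z)<\overline\theta(\mu)+b)$, matching the expression derived for $1-R(\hat c,\tilde\ell_b)$, and the theorem follows.

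There is really no hard step here; the content of the theorem is entirely that the duality $\{\phi_{\theta_0}=1\}=\{\theta_0>\hat c\}$ converts the two nested optimizations into one another once the zero-one loss is read through. The only subtlety worth flagging in the write-up is the use of monotonicity to reduce the inner infimum, and the fact that $\theta_0=\overline\theta(\mu)+b$ is permitted since $M_1^*(\theta_0,b)$ is defined by a weak inequality; this matters because $P_\mu(\hat c<\theta_0)$ need not be right-continuous at atoms of the law of $\hat c(Z)$.
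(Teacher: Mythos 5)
Your proof is correct and follows essentially the same route as the paper: the duality $\{\phi_{\theta_0}=1\}=\{\theta_0>\hat c\}$, a swap of the order of optimization over $(\theta_0,\mu)$, and monotonicity in $\theta_0$ to collapse the inner optimization to the endpoint $\theta_0=\overline\theta(\mu)+b$ (the paper phrases this as a supremum of non-coverage probabilities rather than an infimum of coverage probabilities, which is only a cosmetic difference). Your remark about attainment at the closed endpoint is a fair point of care, and it is implicitly the same fact the paper uses when it evaluates $\sup_{\theta_0\,:\,\overline\theta(\mu)\le\theta_0-b}E_\mu I(\hat c-\theta_0+b\ge b)$ at $\theta_0=\overline\theta(\mu)+b$.
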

\begin{proof}
We have
\begin{align*}
&\beta_{\theta_0}(b)
=\inf_{\mu\in M_1^*(\theta_0,b)} E_\mu \phi_{\theta_0}
=\inf_{\mu \text{ s.t. } \overline \theta(\mu)\le \theta_0-b} E_\mu \phi_{\theta_0}
=\inf_{\mu \text{ s.t. } \overline \theta(\mu)\le \theta_0-b} P_\mu (\theta_0\notin (-\infty,\hat c])  \\
&=\inf_{\mu \text{ s.t. } \overline \theta(\mu)\le \theta_0-b} P_\mu (\theta_0>\hat c)
=1-\sup_{\mu \text{ s.t. } \overline \theta(\mu)\le \theta_0-b} P_\mu (\hat c-\theta_0\ge 0)  \\
&=1-\sup_{\mu \text{ s.t. } \overline \theta(\mu)\le \theta_0-b} E_\mu I(\hat c-\theta_0+b\ge b)
\end{align*}
Taking the infimum of both sides over $\theta_0$ gives
\begin{align*}
&\inf_{\theta_0\in\mathbb{R}} \beta_{\theta_0}(b)
=1-\sup_{\theta_0\in\mathbb{R}}\sup_{\mu \text{ s.t. } \overline \theta(\mu)\le \theta_0-b} E_\mu I(\hat c-\theta_0+b \ge b)
=1-\sup_{\mu\in\mathbb{R}^k} E_\mu I(\hat c-\overline \theta(\mu)\ge b)  \\
&=1-R(\hat c,\tilde \ell_b),
\end{align*}
where the second equality follows by switching the order of the suprema and noting that, for a given $\mu$,
$\sup_{\theta_0\text{ s.t. } \overline \theta(\mu)\le \theta_0-b} E_\mu I(\hat c-\theta_0+b\ge b)=E_\mu I(\hat c-\overline \theta(\mu)\ge b)$.

\end{proof}

While the zero-one loss functions $\tilde \ell_b(t)=I(t\ge b)$ are intuitively appealing, one may wish to consider a more general nondecreasing function $\tilde \ell(t)$.  This can be related to the zero-one loss functions (and therefore minimax power as well), so long as
the same distribution $\mu$ is simultaneously least favorable for each $\tilde \ell_b$.

\begin{theorem}\label{loss_func_thm}
For any increasing function $\tilde \ell:\mathbb{R}^+\to\mathbb{R}^+$, there exists a measure $\nu_{\tilde \ell}$ on $\mathbb{R}^+$ such that the following holds.
For any CI $(-\infty,\hat c]$ such that there exists a parameter value $\mu^*$ that is simultaneously least favorable for all zero-one loss functions $\tilde \ell_b$,
\begin{align*}
R(\hat c,\tilde \ell)=\int R(\hat c,\tilde \ell_b)\, d\nu_{\tilde \ell}(b)
\end{align*}
\end{theorem}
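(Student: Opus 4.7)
The core idea is to write the general loss $\tilde\ell$ as a nonnegative superposition of the indicator losses $\tilde\ell_b$, swap the expectation with this superposition via Tonelli, and then observe that the hypothesis of a single parameter $\mu^*$ that is simultaneously least favorable for every $\tilde\ell_b$ is exactly the condition that lets the outer supremum over $\mu$ commute with the integration in $b$.

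First I would take $\nu_{\tilde\ell}$ to be the Lebesgue--Stieltjes measure on $(0,\infty)$ generated by $\tilde\ell$. Normalizing $\tilde\ell(0)=0$ (which only adds a common constant to both sides of the target identity), this gives the pointwise representation
\begin{equation*}
\tilde\ell(t) \;=\; \int_{(0,\infty)} I(t\ge b)\, d\nu_{\tilde\ell}(b)
\;=\; \int_{(0,\infty)} \tilde\ell_b(t)\, d\nu_{\tilde\ell}(b), \qquad t\ge 0.
\end{equation*}
Setting $T(\mu)=(\hat c(Z)-\overline\theta(\mu))_+$ and noting that $I(T(\mu)\ge b)=I(\hat c(Z)-\overline\theta(\mu)\ge b)$ for $b>0$, Tonelli's theorem (the integrand is nonnegative) yields, for every $\mu$,
\begin{equation*}
E_\mu \tilde\ell(T(\mu)) \;=\; \int E_\mu \tilde\ell_b(T(\mu))\, d\nu_{\tilde\ell}(b).
\end{equation*}

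Next I would close the argument by bounding $R(\hat c,\tilde\ell)=\sup_\mu E_\mu\tilde\ell(T(\mu))$ from both sides. The upper bound is automatic, since $\sup_\mu$ of an integral is dominated by the integral of $\sup_\mu$:
\begin{equation*}
R(\hat c,\tilde\ell) \;\le\; \int \sup_\mu E_\mu\tilde\ell_b(T(\mu))\, d\nu_{\tilde\ell}(b) \;=\; \int R(\hat c,\tilde\ell_b)\, d\nu_{\tilde\ell}(b).
\end{equation*}
For the reverse direction, the hypothesis on $\mu^*$ gives $E_{\mu^*}\tilde\ell_b(T(\mu^*))=R(\hat c,\tilde\ell_b)$ for every $b$ in the support of $\nu_{\tilde\ell}$; integrating this pointwise-in-$b$ identity against $\nu_{\tilde\ell}$ and reversing the Tonelli step yields
\begin{equation*}
\int R(\hat c,\tilde\ell_b)\, d\nu_{\tilde\ell}(b) \;=\; E_{\mu^*}\tilde\ell(T(\mu^*)) \;\le\; R(\hat c,\tilde\ell),
\end{equation*}
which matches the upper bound.

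The only real obstacle is conceptual rather than technical: the commutation of $\sup_\mu$ with the $b$-integral is generically a strict inequality, and the common least favorable assumption is precisely what is needed to upgrade the automatic upper bound into an equality. Minor bookkeeping around right-continuity of $\tilde\ell$ and any atom at $0$ is handled by absorbing $\tilde\ell(0)$ as a common additive constant, and joint measurability of $(\mu,b)\mapsto E_\mu\tilde\ell_b(T(\mu))$ is routine and poses no problem for the Tonelli step.
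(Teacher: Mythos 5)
Your proof is correct and follows essentially the same route as the paper: represent $\tilde\ell$ as a mixture of the zero-one losses $\tilde\ell_b$ via a measure $\nu_{\tilde\ell}$, interchange the expectation and the $b$-integral by Fubini/Tonelli, and use the simultaneously least favorable $\mu^*$ to identify $\int R(\hat c,\tilde\ell_b)\,d\nu_{\tilde\ell}(b)$ with $R(\hat c,\tilde\ell)$. The only cosmetic difference is that you get the direction $R(\hat c,\tilde\ell)\le\int R(\hat c,\tilde\ell_b)\,d\nu_{\tilde\ell}(b)$ from the automatic interchange $\sup_\mu \int \le \int \sup_\mu$ (which needs no least favorability), whereas the paper derives the same inequality by a short contradiction argument using least favorability of $\mu^*$.
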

\begin{proof}
Let $\nu_{\tilde \ell}$ be such that $\tilde\ell(t)=\int \tilde \ell_b(t)\, d\nu_{\tilde \ell}(b)$.
Then
\begin{align*}
&R(\hat c,\tilde \ell)
\ge E_{\mu^*} \tilde \ell((\hat c-\overline \theta(\mu^*))_+)
=E_{\mu^*} \int \tilde \ell_b((\hat c-\overline \theta(\mu^*))_+)\, d\nu_{\tilde \ell}(b)
= \int E_{\mu^*}\tilde \ell_b((\hat c-\overline \theta(\mu^*))_+)\, d\nu_{\tilde \ell}(b)  \\
&= \int R(\hat c,\tilde \ell_b)\, d\nu_{\tilde \ell}(b)
\end{align*}
using Fubini's theorem and the fact that
$R(\hat c,\tilde\ell_b)=E_{\mu^*}\tilde \ell_b((\hat c-\overline \theta(\mu^*))_+)$
by simultaneous least favorability of $\mu^*$ for all $b$.  Similarly, if the inequality in the above display were strict, we would have, for some other $\mu$,
\begin{align*}
\int E_{\mu^*}\tilde \ell_b((\hat c-\overline \theta(\mu^*))_+)\, d\nu_{\tilde \ell}(b) <E_{\mu} \tilde \ell((\hat c-\overline \theta(\mu))_+)
=\int E_{\mu}\tilde \ell_b((\hat c-\overline \theta(\mu))_+)\, d\nu_{\tilde \ell}(b),
\end{align*}
which would imply $E_{\mu}\tilde \ell_b((\hat c-\overline \theta(\mu))_+)>E_{\mu^*}\tilde \ell_b((\hat c-\overline \theta(\mu^*))_+)$ for some $b$, thereby violating simultaneous least favorability of $\mu^*$.

\end{proof}

While the simultaneous least favorability condition used above can be strong in some applications, it will hold in the simple cases I consider here.

\subsection{Minimax Efficiency of Some Popular Tests}

Using the definition of minimaxity developed above, I now compare the relative efficiency of some commonly used tests, and give upper bounds for the minimax power of any test.  I show that the upper bounds satisfy a certain asymptotic sharpness property in the large $k$ case.

Consider the following class of test statistics, based on the one-sided $L^p$ norm for $p\ge 1$:
\begin{align*}
S_p(Z,\theta_0)=\left(\sum_{j=1}^k (\theta_0-Z_j)_+^p\right)^{1/p}
\text{ and } S_\infty(Z,\theta_0)=\max_{1\le j\le k} (\theta_0-Z_j)_+
\end{align*}
and tests based on the least favorable level $\alpha$ critical value
\begin{align*}
c_{p,\alpha}=\sup_{\mu\in\mathbb{R},\theta_0\le \min_{1\le j\le k} \mu(j)}
q_{\mu,1-\alpha}(S_p(Z,\theta_0))
=q_{(0,\ldots,0),1-\alpha}(S_p(Z,0)),
\end{align*}
where $q_{\mu,\tau}$ denotes the $\tau$th quantile under $\mu$, and the equality in the above display follows since increasing $\theta_0$ and decreasing elements of $\mu$ stochastically increases $S_p(Z,\theta_0)$, and the distribution of the test statistic is invariant to adding the same quantity to $\theta_0$ and all components of $\mu$.
The test $\phi_{\theta_0,p}$ is defined as the nonrandomized test that rejects when $S_p(Z,\theta_0)>c_{p,\alpha}$.
Note that this test does not incorporate moment selection (see, e.g., \citealt{hansen_test_2005}, \citealt{andrews_inference_2010} for definitions of moment selection procedures), although some of the analysis below allows for such procedures (note, in particular, that Theorem \ref{upper_bound_thm} gives an asymptotic optimality result among all tests, including those that incorporate moment selection).
Let $\beta_{\theta_0,p}(b)$ be the minimax power for the alternative (\ref{mm_alt_eq}).
Define $\Phi$ to be the standard normal cdf.

\begin{theorem}\label{lp_minimax_thm}
The minimax power of $\phi_{\theta_0,p}$ is given by
\begin{align*}
\beta_{\theta_0,p}(b)
=P_{(-b,\infty,\ldots,\infty)}(S_p(Z,0)>c_{\alpha,p})
=1-\Phi(c_{\alpha,p}-b)
\end{align*}
for $\alpha\le 1/2$.
It is strictly increasing in $p$ for $1\le p\le \infty$.
\end{theorem}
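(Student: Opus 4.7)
The plan is to reduce the infimum defining minimax power to a single-coordinate calculation via a stochastic-monotonicity argument, evaluate the resulting probability in closed form, and then establish strict monotonicity in $p$ by showing that $p\mapsto c_{p,\alpha}$ is strictly decreasing. The reduction rests on two observations: $S_p(Z,\theta_0)$ is a coordinatewise nondecreasing function of the vector $((\theta_0-Z_j)_+)_{j=1}^k$, and raising $\mu(j)$ stochastically increases $Z_j\sim N(\mu(j),1)$, hence stochastically decreases $(\theta_0-Z_j)_+$. Consequently $\mu\mapsto P_\mu(S_p(Z,\theta_0)>c_{p,\alpha})$ is coordinatewise nonincreasing. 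Any $\mu\in M_1^*(\theta_0,b)$ admits some index $j_0$ with $\mu(j_0)\le\theta_0-b$; raising $\mu(j_0)$ up to $\theta_0-b$ (which keeps $\mu$ in $M_1^*$) and sending $\mu(j)\to+\infty$ for $j\neq j_0$ only weakly decreases the rejection probability. Since $S_p$ is symmetric in its coordinates, we may take $j_0=1$, so the infimum is attained in the limit at $\mu^*=(\theta_0-b,+\infty,\ldots,+\infty)$.

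In that limit, $(\theta_0-Z_j)_+\to 0$ almost surely for $j\ge 2$, so $S_p(Z,\theta_0)\to(\theta_0-Z_1)_+$ almost surely with $Z_1\sim N(\theta_0-b,1)$. The assumption $\alpha\le 1/2$ (together with $k\ge 2$, implicit when strict monotonicity in $p$ is of interest) yields $c_{p,\alpha}>0$, so $\{(\theta_0-Z_1)_+>c_{p,\alpha}\}=\{Z_1<\theta_0-c_{p,\alpha}\}$, whose probability under $Z_1\sim N(\theta_0-b,1)$ is $\Phi(b-c_{p,\alpha})=1-\Phi(c_{p,\alpha}-b)$. The equivalent expression $P_{(-b,\infty,\ldots,\infty)}(S_p(Z,0)>c_{p,\alpha})$ follows from the same limit after the shift $Z\mapsto Z-\theta_0$, which rescales the problem to $\theta_0=0$.

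For strict monotonicity in $p$, since $\Phi$ is strictly increasing it suffices to show $c_{p,\alpha}$ is strictly decreasing in $p$. Pointwise, $S_p(Z,0)=\|((-Z_j)_+)_{j=1}^k\|_{\ell^p}$ is nonincreasing in $p$ by the standard ordering of $\ell^p$ norms on $\mathbb{R}^k$, and strictly so at any $Z$ with at least two negative coordinates---an event of positive probability when $k\ge 2$. The main obstacle is promoting this strict pointwise inequality to strict inequality of the $(1-\alpha)$-quantiles. I would handle this by exhibiting, for any $c>0$ and $1\le p<p'\le\infty$, an open set of $Z$ values on which $S_{p'}(Z,0)\le c<S_p(Z,0)$---for instance a small ball around $(-\epsilon,-\epsilon,0,\ldots,0)$ with $\epsilon$ chosen so that $\epsilon\cdot 2^{1/p'}<c<\epsilon\cdot 2^{1/p}$, which is possible because $2^{-1/p}<2^{-1/p'}$ for $p<p'$. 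Absolute continuity of $Z$ under the standard normal then yields $P(S_p>c)>P(S_{p'}>c)$; applied at $c=c_{p,\alpha}$, this gives $P(S_{p'}>c_{p,\alpha})<\alpha$, hence $c_{p',\alpha}<c_{p,\alpha}$.
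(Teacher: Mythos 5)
Your proof is correct and takes essentially the same route as the paper: reduce to the least favorable limit $(-b,\infty,\ldots,\infty)$ via stochastic monotonicity in each $\mu(j)$ plus symmetry, evaluate the resulting one-dimensional normal probability using $c_{p,\alpha}\ge 0$, and obtain strict monotonicity in $p$ from strict decrease of $c_{p,\alpha}$. Your open-set/positive-probability argument for turning the pointwise $\ell^p$-norm ordering into a strict quantile inequality simply makes explicit the detail the paper compresses into its remark that the positive orthant of the $L^p$ ball is strictly contained in that of the $L^q$ ball for $p<q$.
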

\begin{proof}
The first claim follows by symmetry and by noting that power is decreasing in each $\mu(j)$, since the distribution of the test statistic is stochastically decreasing in each $\mu(j)$.
Note that $\alpha\le 1/2$ implies $c_{\alpha,p}\ge 0$, which gives the second equality in the display.
The second claim follows by noting that $c_{\alpha,p}$ is strictly decreasing in $p$, which follows since the positive orthant of $L^p$ balls are (strictly) contained in the positive orthant of $L^q$ balls for $p< q$ (when the radius is the same and both are centered at the origin).
\end{proof}

By Theorem \ref{duality_thm}, it follows that the $L^\infty$ test leads to the best minimax confidence region among $L^p$ statistics for any zero-one loss function.  In fact, Theorem \ref{lp_minimax_thm} shows that the same distribution $(-b,\infty,\ldots,\infty)$ is least favorable distribution for zero-one loss $\tilde \ell_b$ for any $b$.  Thus, Theorem \ref{loss_func_thm} applies, and the $L^\infty$ CI is optimal for any increasing loss function.

I now state an upper bound on minimax power for any level $\alpha$ test, which is essentially a restatement of results in \citet{dumbgen_multiscale_2001} and \citet{chernozhukov_testing_2014} \citep[the upper bound apparently goes back at least to][]{ingster_asymptotically_1993}.
While the results in those papers are for minimax testing in the $L^\infty$ norm, the results translate immediately to our setting. This is because, for the example considered here, minimax one sided inference on $\theta$ is equivalent to one-sided $L^\infty$ minimaxity in a nonparametric testing problem.  I discuss this equivalence further in Section \ref{norm_comp_sec} below.

\begin{theorem}\label{upper_bound_thm}
For any level $\alpha$ test $\phi_{\theta_0}$ of (\ref{ineq_test_eq}), the minimax power $\beta_{\theta_0}(b)$ is bounded by
\begin{align*}
\overline \beta(b;k)= P_{(-b,0,\ldots,0)}\left(\sum_{j=1}^k \exp(-Z_jb)> \tilde c_{\alpha}\right)
\end{align*}
where
\begin{align*}
\tilde c_{\alpha}=q_{(0,0,\ldots,0),1-\alpha}\left(\sum_{j=1}^k \exp(-Z_jb)\right).
\end{align*}
Furthermore, as $k\to \infty$,
\begin{align*}
\overline \beta(\sqrt{(2-\varepsilon)\log k};k)\to \alpha
\end{align*}
and
\begin{align*}
\beta_{\theta_0,\infty}(\sqrt{(2+\varepsilon)\log k};k)\to 1
\end{align*}
for any $\varepsilon>0$,
where $\beta_{\theta_0,\infty}(b;k)$ denotes the minimax power of the $L^\infty$ test for $M_1^*(b,\theta_0)$ for the given value of $k$.
\end{theorem}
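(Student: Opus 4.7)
The plan is to prove the three claims in turn: the upper bound $\overline\beta(b;k)$ via a Neyman--Pearson duality argument against a carefully chosen prior on $M_1^*(\theta_0,b)$; the positive asymptotic result on the $L^\infty$ test via Theorem~\ref{lp_minimax_thm} combined with a standard extreme-value estimate on $c_{\infty,\alpha}$; and the asymptotic sharpness of $\overline\beta$ via an Ingster-type truncated second-moment calculation. The third step is the main obstacle.

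For the upper bound, translate so $\theta_0=0$ and take the null point $\mu_0=(0,\ldots,0)\in M_0(0)$. Let $\pi$ be the uniform prior on $\{-b\,e_1,\ldots,-b\,e_k\}$, where $e_j$ is the $j$th standard basis vector of $\mathbb{R}^k$; each support point lies in $M_1^*(0,b)$ since $\overline\theta(-be_j)=-b$. For any level $\alpha$ test $\phi$, the infimum over $M_1^*$ is bounded by $E_\pi E_\mu\phi$, and by the Neyman--Pearson lemma the latter is bounded, subject to $E_{\mu_0}\phi\le\alpha$, by the Bayes power of the test that rejects when the likelihood ratio
\begin{equation*}
\frac{p_\pi(Z)}{p_{\mu_0}(Z)} = \frac{1}{k}\sum_{j=1}^k \exp\!\left(-bZ_j-\tfrac{b^2}{2}\right)
\end{equation*}
exceeds a threshold, equivalently when $\sum_j \exp(-bZ_j)>\tilde c_\alpha$. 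By symmetry of $\pi$, this Bayes power equals the power at $\mu=-be_1$, giving $\overline\beta(b;k)$ as stated.

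For the $L^\infty$ asymptotic, Theorem~\ref{lp_minimax_thm} gives $\beta_{\theta_0,\infty}(b;k)=1-\Phi(c_{\infty,\alpha}-b)$. The critical value $c_{\infty,\alpha}$ is the $(1-\alpha)$ quantile of $\max_{j\le k}(-Z_j)_+$ with $Z_j$ iid $N(0,1)$, so inverting $\Phi(c_{\infty,\alpha})^k=1-\alpha$ via the Mills-ratio expansion yields $c_{\infty,\alpha}=\sqrt{2\log k}+o(\sqrt{\log k})$. For $b=\sqrt{(2+\varepsilon)\log k}$ this forces $c_{\infty,\alpha}-b\to -\infty$ and hence $\beta_{\theta_0,\infty}(b;k)\to 1$.

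The main obstacle is the sharpness statement. Writing $L_k=p_\pi/p_{\mu_0}=k^{-1}\sum_j\exp(-bZ_j-b^2/2)$, the upper-bound argument together with $\|P_{\mu_0}-P_\pi\|_{TV}=\tfrac{1}{2}E_{\mu_0}|L_k-1|$ reduces the task to showing $E_{\mu_0}|L_k-1|\to 0$ when $b^2=(2-\varepsilon)\log k$. The naive second moment
\begin{equation*}
E_{\mu_0}L_k^2 = 1+\frac{e^{b^2}-1}{k}
\end{equation*}
diverges once $b^2>\log k$, falling short of the sharp threshold $b^2=2\log k$. The standard Ingster remedy is to truncate: set $\tau_k=\sqrt{2\log k}$ and let $\tilde L_k=k^{-1}\sum_j\exp(-bZ_j-b^2/2)I(Z_j\ge -\tau_k)$. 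Direct Gaussian computations give $E_{\mu_0}|L_k-\tilde L_k|=1-\Phi(\tau_k-b)\to 0$ and
\begin{equation*}
E_{\mu_0}(\tilde L_k-1)^2 = \frac{e^{b^2}\Phi(\tau_k-2b)}{k}+\bigl(1-\Phi(\tau_k-b)\bigr)^2-\frac{\Phi(\tau_k-b)^2}{k}.
\end{equation*}
A Mills-ratio estimate for $\Phi(\tau_k-2b)$ shows that the dominant term $e^{b^2}\Phi(\tau_k-2b)/k$ is of order $k^{-c(\varepsilon)+o(1)}$ with $c(\varepsilon)>0$, which is precisely where the sharp constant $2$ in $\sqrt{2\log k}$ enters. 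The triangle inequality
\begin{equation*}
E_{\mu_0}|L_k-1|\le E_{\mu_0}|L_k-\tilde L_k|+\sqrt{E_{\mu_0}(\tilde L_k-1)^2}
\end{equation*}
then closes the argument. Verifying that $c(\varepsilon)>0$ for every $\varepsilon>0$ is the delicate piece of the calculation; it is the content of \citet{ingster_asymptotically_1993}, and its adaptation to related testing problems appears in \citet{dumbgen_multiscale_2001} and \citet{chernozhukov_testing_2014}.
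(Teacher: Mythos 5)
Your proposal is correct, and for the first and third claims it follows exactly the paper's route: the upper bound comes from the Neyman--Pearson test of the simple null $(\theta_0,\ldots,\theta_0)$ against the uniform mixture over the $k$ one-coordinate deviations $\theta_0-be_j$ (each of which lies in $M_1^*(\theta_0,b)$), with permutation symmetry identifying the Bayes power with the power at $(-b,0,\ldots,0)$; and the $L^\infty$ claim follows from Theorem \ref{lp_minimax_thm} together with $c_{\alpha,\infty}/\sqrt{2\log k}\to 1$. Where you genuinely diverge is the sharpness claim $\overline\beta(\sqrt{(2-\varepsilon)\log k};k)\to\alpha$: the paper simply cites Lemma 6.2 of \citet{dumbgen_multiscale_2001} (via \citealt{chernozhukov_testing_2014}), whereas you sketch the underlying Ingster-type truncated second-moment argument directly. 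Your intermediate formulas check out: $E_{\mu_0}L_k^2=1+(e^{b^2}-1)/k$, $E_{\mu_0}|L_k-\tilde L_k|=1-\Phi(\tau_k-b)$, and your expression for $E_{\mu_0}(\tilde L_k-1)^2$ is exactly what the Gaussian computation gives. The one step you defer to \citet{ingster_asymptotically_1993} is in fact a short calculation: for $\varepsilon<3/2$ (so that $\tau_k-2b\to-\infty$) the Mills-ratio bound gives exponent $(2-\varepsilon)-1-(2\sqrt{2-\varepsilon}-\sqrt2)^2/2=-(\sqrt2-\sqrt{2-\varepsilon})^2<0$, i.e.\ $c(\varepsilon)=(\sqrt2-\sqrt{2-\varepsilon})^2$; for $\varepsilon\ge 3/2$ the Mills bound is not applicable (the argument of $\Phi$ no longer diverges to $-\infty$), but the trivial bound $\Phi\le 1$ gives $e^{b^2}/k=k^{1-\varepsilon}\to0$ since $\varepsilon>1$, so the claim holds throughout. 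Two small housekeeping points: you should note that convergence to $\alpha$ (not merely an upper bound $\alpha+o(1)$) uses the fact that the most powerful level-$\alpha$ test has power at least $\alpha$, and that the total-variation step bounds the power of \emph{any} size-$\alpha$-at-$\mu_0$ test against the mixture, hence $\overline\beta$ itself. In exchange for the extra work, your version is self-contained where the paper's is a citation, which is a reasonable trade.
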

\begin{proof}
The minimax power cannot be greater than the power of the most powerful test of $(\theta_0,\ldots,\theta_0)$ against the alternative that places weight $1/k$ on each $(\theta_0,\ldots,\theta_0,\theta_0-b,\theta_0,\ldots,\theta_0)$, where the position of $\theta_0-b$ ranges from $1$ to $k$.  By symmetry, the minimax power for this subproblem is equal to the minimax power in the same problem when $\theta_0=0$.  The first result follows by calculating the Neyman-Pearson test for this problem.

The second claim follows by Lemma 6.2 in
\citet{dumbgen_multiscale_2001} \citep[see Section 5 of][]{chernozhukov_testing_2014}.
The final result follows since $c_{\alpha,\infty}/\sqrt{2\log k}\to 1$ as $k\to \infty$ (note that the dependence of $c_{\alpha,\infty}$ on $k$ is supressed in the notation).
\end{proof}

Theorem \ref{upper_bound_thm} shows that the $L^\infty$ test is approximately optimal in a certain sense for large $k$.  As $k\to \infty$, no test can have nontrivial minimax power against all alternatives that deviate from the null by $\sqrt{(2-\varepsilon) \log k}$, and the $L^\infty$ test has minimax power approaching one as $k\to \infty$ for alternatives that deviate from the null by $\sqrt{(2+\varepsilon)\log k}$.  Equivalently, any CI must increase at least proportionally to $\sqrt{(2-\varepsilon)\log k}$ for some sequence of distributions, and the $L^\infty$ CI increases proportionally to $\sqrt{(2+\varepsilon)\log k}$ or less for all sequences of distributions.
Note also that this result shows that the $L^\infty$ test is close to optimal even without the moment selection procedures mentioned above.

\section{Comparison to Other Notions of Minimax Testing}\label{norm_comp_sec}

The nonparametric statistics literature has considered numerous definitions of minimax tests in problems similar to the one considered here (see \citealt{ingster_nonparametric_2003} for an overview of this literature).  To put these into the context of the present setup, consider testing $H_{0,\theta_0}$ when $\theta_0=0$.  Then, the null hypothesis takes the form $\mu(j)\ge 0$ all $1\le j\le k$.  Suppose that we are interested in testing this null hypothesis in an abstract sense, without the model of Section \ref{setup_sec} to guide our choice of alternatives.

As discussed in Section \ref{minimax_ci_sec}, one needs to separate the alternative hypothesis from the null in order for minimaxity to be interesting.  Without the structure of a low dimensional parametric model like the one defined in Section \ref{setup_sec}, one can imagine many ways of doing this.  A popular choice in the minimax testing literature is to use the $L^p$ norm or, in our case, the one-sided $L^p$ norm\footnote{The formulation given here is a natural extension of the setup for testing the simple null $H_0:\mu=0$ using minimaxity with respect to the $L^p$ norm considered in the literature described in \citet{ingster_nonparametric_2003}.  However, I am not aware of papers in this literature considering the one-sided case considered here for $p<\infty$.  One sided minimax optimality in the $L^\infty$ norm, has been considered by \citet{dumbgen_multiscale_2001}, \citet{chetverikov_adaptive_2012} and
\citet{chernozhukov_testing_2014}.}
\begin{align*}
&\|x\|_{-,p}=\left(\sum_{j=1}^k (-x_k)_+^p\right)^{1/p}
\text{ for } 1\le p<\infty,  \\
&\|x\|_{-,\infty}=\max_{1\le j\le k} (-x_k)_+,
\end{align*}
and to use this as a notion of distance to separate the null and alternative.  Define
\begin{align*}
M_{1,p}(b)=\{\mu| \|\mu\|_{-,p}\ge b\}.
\end{align*}
The minimax power of a test $\phi$ is then given by
\begin{align*}
\beta_{p}(b)=\inf_{\mu\in M_{1,p}(b)}E_\mu \phi.
\end{align*}

I now make two points regarding the relation between these notions of minimaxity and the notion of minimaxity based on distance to the identified set, which was developed in Section \ref{minimax_ci_sec} and shown to correspond to a certain form of minimaxity of confidence intervals.  First, note that the definition in the above display for $p=\infty$ is the same as the one in Section \ref{minimax_ci_sec}: $M_{1,\infty}(b)=M_{1}^*(0,b)$.  Thus, one can interpret the analysis in Section \ref{minimax_ci_sec} as using the structure of the model to choose a one-sided norm in an abstract definition of minimaxity.
The close connection between $L^\infty$ minimaxity and confidence intervals for $\theta$ in this problem should not be surprising, given that
\citet{armstrong_weighted_2014} and \citet{chetverikov_adaptive_2012}
arrived at essentially the same prescription for which tests should be used in a more general version of the problem considered in this paper, with
\citet{armstrong_weighted_2014} considering minimaxity of confidence intervals
and \citet{chetverikov_adaptive_2012} considering $L^\infty$ minimaxity for the conditional mean.

Second, minimaxity with other definitions of distance would, in general, lead to different prescriptions for the optimal test.  While minimax power comparisons and characterizations of minimax optimal or near minimax optimal tests do not appear to be available in the literature for the one-sided $L^p$ norm for $p<\infty$ (in contrast to numerous results considering two-sided testing in the $L^p$ norm), some limited Monte Carlo experiments indicate that some of the other tests considered in Section \ref{minimax_ci_sec} have better minimax power than the max test under $L^p$ norms with $p<\infty$.

\section{A Testing Problem Leading to $L^1$ Minimaxity}

In the previous sections, I have argued that one should use $\theta$ to define minimaxity in cases where tests are being inverted to obtain a confidence region for this parameter.  It was shown that, for the problem considered in this paper, this notion of minimaxity happened to coincide with a definition of minimaxity involving the one-sided $L^\infty$ norm on the conditional mean.  It was also pointed out that the results of \citet{armstrong_weighted_2014} and \citet{chetverikov_adaptive_2012} suggest that there is a close connection between these definitions of minimaxity in problems of this type encountered in empirical economics more generally.

It should be emphasized that the connection to $L^\infty$ minimaxity is a feature of this and other parametric moment inequality models considered in empirical economics (at least those satisfying conditions given in \citealt{armstrong_weighted_2014}), rather than a general justification for considering the $L^\infty$ norm when defining minimaxity in one-sided testing problems.  To illustrate this point, I now describe a testing problem where a plausible specification of a researcher's utility function leads to the one-sided $L^1$ (rather than $L^\infty$) norm in the definition of minimaxity, even though
the null hypothesis is formally the same (nonpositivity of a mean vector) up to a sign normalization.  The motivation for this notion of minimaxity is, perhaps, less strong in this problem, and the description that follows should not be taken as an argument for its use.  Rather, the point is simply to show that reasonable formulations of minimaxity do not always lead to some variant of the $L^\infty$ norm, even though this appears to be the case in many problems arising in inference on set identified parameters in empirical economics.

Consider an optimal treatment assignment problem following \citet{manski_statistical_2004}, with a sample stratified by a finitely discrete variable taking on values normalized to $\{1,\ldots,k\}$.  Rather than optimal treatment assignment, I consider a related hypothesis testing problem.  The researcher observes outcome variables $Y=Y_1D+Y_0(1-D)$ along with the variable $X\in\{1,\ldots,k\}$, where $D$ is an indicator for treatment, and an unconfoundedness assumption is assumed to hold: $E(Y_j|X,D)=E(Y_j|X)$ for $j\in\{0,1\}$.  The goal is to find a treatment rule $r:\{1,\ldots,k\}\to \{0,1\}$ maximizing
\begin{align}\label{treatment_obj}
E(u(Y_{r(X)}))  
\end{align}
where $u(y)$ is the Bernoulli utility that the social planner assigns to the outcome $y$ for a given individual.
In practice, the data generating process is unknown,
and the treatment rule is based on a random sample $\{(X_i,Y_i,D_i)\}_{i=1}^n$,
and the risk of the treatment rule $\hat r$ is obtained by plugging in $\hat r$ to (\ref{treatment_obj}) and integrating over the data generating process for the sample that leads to $\hat r$.

Suppose that, rather than (or in addition to) recommending a treatment rule, a researcher or policy maker is interested in testing the null hypothesis that no individuals should be treated.  That is, the null hypothesis is that the treatment rule $r$ that would maximize (\ref{treatment_obj}) given full knowledge of the data generating process sets $r(j)=0$ for all $j$.  For simplicity, let us assume that $u(t)=t$ (i.e. $Y$ is already measured in units of Bernoulli utility), and suppose that
we observe a sample $\{(X_i,Y_i,D_i)\}_{i=1}^n$, where
$n$ is a multiple of $2k$, and the sample has $n/(2k)$ observations with $X_i=j$ and $D_i=\ell$ for each $j\in\{1,\ldots,k\}$ and $\ell\in\{0,1\}$.  We will treat the $X_i$'s in the sample as nonrandom, so that we require size control and evaluate power conditional on the $X_i$'s.  We assume that $X$ is distributed uniformly on $\{1,\ldots,k\}$ in the population, so that the expectation in (\ref{treatment_obj}) is evaluated with respect to a uniform distribution on $X$.

Let $\tau(x)=E(Y_1|X=x)-E(Y_0|X=x)$ for $x\in\{1,\ldots,k\}$.  Suppose that $Y_i$ is normal with (known) variance $n/(4k)$, so that
\begin{align*}
Z_j=\frac{1}{n/(2k)}\sum_{X_i=j,D_i=1} Y_i-\frac{1}{n/(2k)}\sum_{X_i=j,D_i=0}Y_i
\sim %
N(\tau(j),1)
\end{align*}
for $j=1,\ldots,k$.  With this notation and set of assumptions, we have, for a treatment rule $r$,
\begin{align*}
E(u(Y_{r(X)}))-E(u(Y_{0}))
=\frac{1}{k}\sum_{j=1}^k \tau(j)r(j).
\end{align*}
Given knowledge of the data generating process, the treatment rule that would maximize (\ref{treatment_obj}) would simply set $r(j)=1$ for $\tau(j)>0$ and $0$ otherwise.  Thus, letting $r^*$ be this treatment rule, the gain in welfare from using $r^*$ relative to a rule that assigns nontreatment to the entire population is
\begin{align}\label{welfare_gain_eq}
w^*=w^*(\tau)=E(u(Y_{r^*(X)}))-E(u(Y_{0}))
=\frac{1}{k}\sum_{j=1}^k (\tau(j))_+
=\frac{1}{k}\|\tau\|_{+,1}
\end{align}
where $\tau$ is the vector with $j$th component $\tau(j)$ and $\|x \|_{+,1}=\sum_{j=1}^k (x_k)_+$ is the positive $L^1$ norm, analogous to the negative $L^1$ norm defined in Section \ref{norm_comp_sec}.

Thus, the null hypothesis that $r^*(j)=0$ all $j$ can be written as
\begin{align}\label{te_null_eq}
H_0: \tau(j)\le 0 \text{ all } j.
\end{align}
As discussed in Section \ref{minimax_ci_sec}, one must separate the alternative from the null in order for minimax testing to be interesting.  Consider the alternative set defined using the welfare gain $w^*$ from population optimal treatment assignment:
\begin{align*}
M_1(b)=\{\tau| w^*(\tau)\ge b\}.
\end{align*}
For a test $\phi$ with level $\alpha$ for the null (\ref{te_null_eq}), the minimax power is then
\begin{align*}
\beta^*(b,\phi)=\inf_{\tau\in M_1(b)}E_{\tau} \phi.
\end{align*}
Note that, in contrast to the definition of minimaxity that came out of considering confidence intervals in the  moment inequality model defined in Section \ref{minimax_ci_sec}, this definition of minimax coincides with the one based on the one-sided $L^1$ norm, rather than the one-sided $L^\infty$ norm (with the obvious change of signs).

This example is somewhat contrived, and I do not wish to argue for the adoption of $\beta^*(b,\phi)$ for hypothesis testing problems related to treatment assignment.  Rather, I would like to argue that (1) $\beta^*(b,\phi)$ has a simple economic interpretation that can be useful in understanding the properties of a test $\phi$, (2) relative power comparisons based on $\beta^*(b,\phi)$ arise from a reasonable specification of a researcher's utility and (3) a definition of minimax in terms of the one-sided $L^\infty$ norm would lead to neither of these properties.

Regarding point (1), suppose that a researcher is interested in treatment effect heterogeneity and wants to explain the results of a test $\phi$ to a social planner, who is contemplating statistical treatment rules and has the preferences formulated above.  The social planner wants to know how good the test is at detecting data generating processes for which treatment of some individuals is desirable.  The researcher can explain as follows. ``The test $\phi$ will reject only $100\cdot \alpha$\% of the time if there is no gain to treatment.  If it is possible to design a treatment rule that improves on nontreatment by at least $b$ expected utils after collecting a very large amount of data, the test will reject with probability at least $100\cdot \beta^*(b,\phi)$\%.''

Regarding point (2), consider the following decision problem.  The same researcher and social planner described above are deciding whether to pursue a social program based on the data described above.  If they decide to pursue the project, another team of researchers will collect a very large amount of data, and will implement the (population) optimal treatment rule.  The only constraint is that the other team of researchers has a limit on the proportion of ultimately fruitless projects that they will pursue, say, $100\cdot \alpha$\%.
If the potential welfare gains from a project are small, say $w^*(\tau)<b$, and the project is not pursued, it will go unnoticed.  However, if there are large welfare gains, say, $w^*(\tau)\ge b$, a different team of researchers from a rival political party will discover this and use it to damage the social planner's reputation.  With this formulation, the researcher and social planner using a test that maximizes $\beta^*(b,\phi)$ to decide whether to pursue a social program is a minimax decision in the sense of minimizing the worst case expected loss.
Of course, certain assumptions in this formulation are unrealistic (e.g. if there are winners and losers in the social program, the proportion of the population with $Y_1>Y_0$ would be more relevant for the political reputation of the social planner than average welfare).  The point is simply that there is some decision problem related to the optimal treatment assignment problem that leads to $\beta^*(\beta,\phi)$ as a criterion for choosing between tests.

Regarding point (3), since there is no direct relation between the one-sided $L^\infty$ norm and expected welfare, defining minimaxity in this way would not lead to the same interpretations for minimax tests.  Suppose that a researcher wanted to explain to a social planner the power properties of a test in terms of minimax one-sided $L^\infty$ power.  The researcher would translate $L^\infty$ minimaxity to the social planner by making a statement along the lines of: ``as long as there exists a $j$ such that individuals with $X=j$ benefit from the program by at least $b$, the test will reject a certain percentage of the time.''  The social planner would likely object, saying: ``but I explained to you that my objective function was average welfare, and you're describing the welfare of those with the value of $X$ who benefit the most.''  Similarly, while one could formulate a decision problem for which $L^\infty$ minimaxity makes sense, it would not be related directly to the social planner's objective function involving expected welfare.

Thus, the treatment assignment problem described above leads to a different one-sided norm in a reasonable definition of minimax testing.  While the motivation for minimax testing in this example is somewhat contrived (arguably much more so than in the example in Section \ref{minimax_ci_sec}, where minimaxity was related directly to a confidence interval for a parameter of interest), these examples illustrate the point that an appropriate definition of minimax testing depends on the structure of the economic problem.

\section{Conclusion}

This note has used a simple example to show how the structure of moment inequality models used in economics leads to relative efficiency results.  A low dimensional parametric model defines a natural distance for minimax testing, which has a duality with minimax risk for the corresponding confidence intervals.  The low dimensional parametric model can be interpreted as providing a justification for using a particular notion of ``distance'' in defining minimax testing, which would be arbitrary in a more general setup.

\bibliography{library}

\end{document}